\newtheorem{theorem}{Theorem}
\newtheorem{lemma}{Lemma}
\newtheorem{define}{Definition}
\titlespacing{\section}{0pt}{4pt}{4pt}
\titlespacing{\subsection}{0pt}{4pt}{4pt}
\titlespacing{\subsubsection}{0pt}{4pt}{4pt}
\setlist[itemize]{leftmargin=*}
\title{\textbf{Algorithms for Multiple Drone-Delivery Scheduling Problem (MDSP)} \vspace{-10pt}}
\author[1]{Sagnik Anupam \footnote{Equal contribution}} 
\author[1]{Nicole Lu \protect{*}}
\author[1]{John Sragow \protect{*}\vspace{-15pt}}
\affil[1]{\small{\textit{Department of Electrical Engineering and Computer Science, MIT, Cambridge, MA, USA}}}
\affil[ ]{\{sanupam, nicolelu, ysragow\}@mit.edu}
\date{\vspace{-50pt}}
\begin{document}
\maketitle
\begin{abstract}
    The Multiple Drone-Delivery Scheduling Problem (MDSP) is a scheduling problem that optimizes the maximum reward earned by a set of $m$ drones executing a sequence of deliveries on a truck delivery route. The current best-known approximation algorithm for the problem is a \(\frac{1}{4}\)-approximation algorithm developed by \cite{jana22}. In this paper, we propose exact and approximation algorithms for the general MDSP, as well as a unit-cost variant. We first propose a greedy algorithm which we show to be a $\frac{1}{3}$-approximation algorithm for the general MDSP problem formulation, provided the number of conflicting intervals is less than the number of drones. We then introduce a unit-cost variant of MDSP and we devise an exact dynamic programming algorithm that runs in polynomial time when the number of drones $m$ can be assumed to be a constant.
\end{abstract}

\section{Introduction}
\noindent
Today, delivery companies can execute last-mile deliveries with the help of drones, which cooperate with a truck on a particular delivery route by flying from the truck, delivering the package to the customer, and returning to the truck. Some formulations of the problem involve both the truck and the drones making deliveries, but the most recent published work in the area has focused on drone-only deliveries and merely use the truck as a carrier for the drones \citep{sorbelli22}. 

In this paper, we explore algorithms for solving the Multiple Drone-Delivery Scheduling Problem (MDSP) variant, a drone-only delivery problem formulation introduced by \cite{sorbelli22}. In their paper, they proved that MDSP is an NP-hard problem \citep[Theorem 3.1]{sorbelli22}. Additionally, they proposed an Integer Linear Programming (ILP) solution for small input sizes, and then also proposed and evaluated three heuristics for use on larger inputs. 

Subsequently, \cite{jana22} developed an $O(n^2)$-runtime greedy $\frac{1}{4}$-approximation algorithm for MDSP. They also introduced the Single-Drone delivery Scheduling Problem (SDSP) variant and an $O(n^2\lfloor\frac{n}{\epsilon}\rfloor)$ Fully Polynomial-Time Approximation Scheme (FPTAS) for it.

In this paper, we seek to modify the greedy algorithm and develop our own $\frac{1}{3}$-approximation algorithm. Additionally, we also examine a unit-cost variant of MDSP and discuss an exact algorithm for it. The paper is structured as follows: first, we provide a comprehensive problem description as well as the ILP formulation for the problem in Section \ref{sec:2}. We then describe \cite{jana22}'s greedy algorithm and present our own modified greedy algorithm in Section \ref{sec:3}. Then, we present the unit-cost variant of MDSP in Section \ref{sec:4}, and provide a dynamic programming algorithm that can generate an exact solution to the variant in polynomial time, provided the number of drones can be taken as a constant. We present our conclusions, as well as directions for future research in Section \ref{sec:5}.

\section{Problem Definition}
\label{sec:2}

The MDSP problem assumes that the delivery company has a pre-defined truck route and pre-calculated launch/rendezvous points for the truck and drones, and based on that route, drone delivery schedules can be created, accounting for the following considerations. \citep{sorbelli22}.

\begin{itemize}
    \item One drone can only serve one customer at a time and must return to the truck to pick up a package for the next customer before embarking on the next delivery.
    \item Delivery destinations have pre-defined launch and rendezvous points and as such, have defined start and end times for the job. Since drones can only serve one customer at a time, jobs whose time intervals overlap cannot be served by the same drone and are said to be in conflict with each other.
    \item Drones have limited battery capacity and cannot recharge during the delivery route (i.e., can only recharge at the depot). Correspondingly, every delivery has an associated energy cost which is not necessarily related to the time required to make the delivery. In the problem variant we study, all drones have the same battery capacity.
    \item Deliveries have specified profits associated with them so deliveries have varying levels of importance.
\end{itemize}

\begin{wrapfigure}{R}{0.5\textwidth}
    \begin{center}
    \includegraphics[width=0.48 \textwidth]{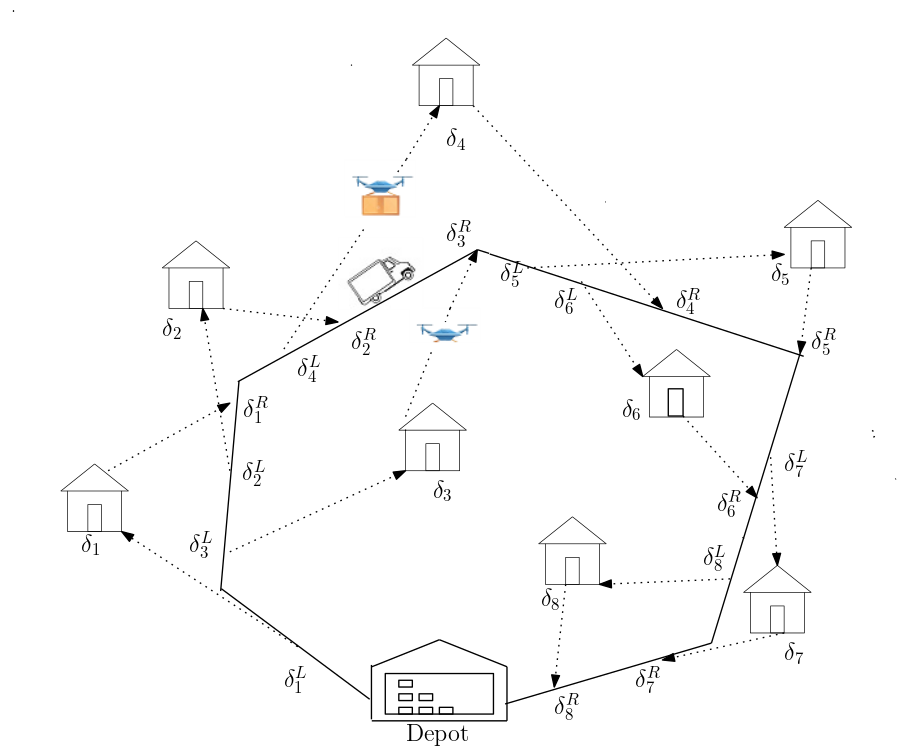}\\
    \caption{Diagram showing the truck delivery route (solid line) and drone delivery routes (dotted lines) and delivery destinations (\(\delta_i\)). Figure from \citep{jana22} and \citep{sorbelli22}}.
    \end{center}
\end{wrapfigure}

The Multiple Drone-Delivery Scheduling Problem (MDSP) is then defined as follows: given the truck's route in the city, plan a schedule for drones such that the total reward is maximized subject to the energy cost and the delivery-conflict constraints. 

Formally, we can describe the MDSP as an Integer Linear Program (ILP) using the following notation as described in \cite{jana22}. Let the truck carry \(m\) identical drones and let there be \(n\) delivery destinations defined as \(D = (\delta_1, \delta_2,\) \(\dots, \delta_n)\). For each destination \(\delta_i\), we construct a pair of two points, the launch point \(\delta^L_i\) and the rendezvous point \(\delta^R_i\), denoted as \((\delta^L_i, \delta^R_i)\). These two points denote the two points on the path of the truck from where the drone delivering the package to the destination \(\delta_i\) launched and landed respectively. 

Now, we describe some of the variables required to specify the scheduling constraints. Let \(B \geq 0\) be the energy budget (battery life) of each of the identical drones, and \(c_j \geq 0\) be the energy cost of executing a delivery to \(\delta_j\). In general, throughout this paper, WLOG we assume that all $c_i \leq B$ i.e. no single job cost exceeds a drone's whole battery capacity (if there exist any such jobs, we remove them from consideration, since no single drone can successfully perform the job).

We now describe the notation used to specify each delivery job. We construct a sequence of times \(t_0=0, t_1, \dots, t_{r+1}\), and assign \(t_0\) to the time the truck leaves the depot and \(t_{r+1}\) to the time the truck returns to the depot. Let times \(t_j^L\) and \(t_j^R\) be the launch and rendezvous times when a drone leaves to execute a delivery job \(\delta_j\) and returns to the truck respectively. 

For each job \(\delta_j\), we specify an interval \(I_j = (t_j^L, t_j^R)\). Two job intervals \(I_a\) and \(I_b\) are said to be in conflict if \(I_a \cap I_b \neq \varnothing\), i.e. the jobs' assigned times overlap; otherwise, the jobs are not in conflict with each other and can potentially be delivered by the same drone. Let \(I = \{I_1, I_2, \dots, I_n\}\) be the set of all the job intervals. We define an \textit{assignment} as a subet \(S_i \subset I\) that denotes the set of jobs successfully completed by drone \(i\). Then, let a \textit{family of assignments} be a set \(S = \{S_1, \dots, S_m\}\) comprising assignments for all $m$ drones. Assignment \( S_i \subseteq I\) is said to be \textit{compatible} if no two jobs in \(S_i\) are in conflict. \(S_i\) is said to be \textit{feasible} if it satisfies the energy constraint i.e. the cost function $\mathcal{W}$ satisfies \(\mathcal{W}(S_i) = \sum_{I_j \in S_i} c_j \leq B\).

Let \(p_j \geq 0\) be the reward for executing a delivery \(\delta_j\). For the rest of the paper, we define the total reward of an assignment \(S_i\) as \(\mathcal{P}(S_i) = \sum_{I_j \in S_i} p_j\). We also define the total reward of a family of assignments \(S\) as \(\mathcal{P}(S) = \sum_{1 \leq i \leq m } \mathcal{P}(S_i)\). Thus, to find the optimal family of assignments to solve MDSP, we have to find \(S^* = \text{arg max } \mathcal{P}(S)\) over all possible families of compatible and feasible assignments $S$. We hence have the ILP as formulated in \cite{sorbelli22} and \cite{jana22} presented in (\ref{eq:1}-\ref{eq:5}), where the decision variable \(x_{ij} \in \{0, 1\}\) is used to denote if the delivery \(j\) was executed by drone \(i\). The first constraint enforces the energy constraint, while the second ensures that each delivery is executed only once, and the third ensures that no two deliveries made by the same drone are overlapping.

\begin{wrapfigure}{r}{.5\textwidth}
\begin{align}
\label{eq:1}
& \text{max }\sum_{i=1}^m\sum_{j=1}^np_jx_{ij} \text{, subject to:} \\
& \sum_{j=1}^nc_jx_{ij} \leq B\;\;\;\;\forall\;i \in \mathcal{M} \\
& \sum_{i=1}^mx_{ij} \leq 1\;\;\;\;\forall\;j \in \mathcal{N}\;\; \\
& x_{ij} + x_{ik} \leq 1\;\forall\;i \in \mathcal{M};\;\forall\;j, k \in \mathcal{N}\;\text{s.t}\;I_j \cap I_k = \varnothing \\
& x_{ij}\in\{0,1\}\;\;\;\;\forall\;i\leq n,\;j \leq m
\label{eq:5}
\end{align}
\end{wrapfigure}

The ILP formulation is helpful for understanding the structure and constraints of the problem but it is difficult to apply the formulation to solve problems with large input sizes. Thus, \cite{sorbelli22} proposed three heuristics to discover solutions to the problem.

The first of their three proposed heuristics, the \emph{MR-S} algorithm, is an $O(n \text{ log } n)$ time-complexity greedy algorithm that sorts jobs by profit-cost density and assigns non-conflicting jobs to a single drone in descending profit-cost density order. 

The second heuristic, the \emph{MC-M} algorithm, calculates cliques formed by interval conflicts in the job list and sequentially performs partitioning on the maximum clique induced by the remaining un-scheduled jobs. This is repeated until all jobs are scheduled. The runtime of this heuristic is $O(m(n \text{ log } n + h(n)))$, where $h(n)$ is the runtime of the optimal partition generation algorithm. 

Their final heuristic, the \emph{MR-M} algorithm, has $O(m(n\text{ log } n))$ runtime. Its uses their single drone \emph{MR-S} algorithm and schedules multiple drones sequentially by running the greedy single-drone algorithm on one drone at a time. All three heuristics have $O(n)$ space complexity \citep{sorbelli22}.

However, more recent work on this problem has focused on developing various approximation algorithms for the problem \citep{jana22}. Our work builds in that same direction, as we wish to develop algorithms that provide better approximation factors than the existing algorithms. The performance guarantees for these approximation algorithms rely on the maximum number of overlapping jobs at any point in time, $\Delta$, and in accordance with previous work done on this problem, we focus on the cases where $\Delta \leq m$ i.e. the maximum number of overlapping jobs at any point in time is at most $m$. In the next section, we thus describe \cite{jana22}'s $\frac{1}{4}$-approximation algorithm, as well as present our own $\frac{1}{3}$-approximation algorithm.

\section{Approximation Algorithms for MDSP}
\label{sec:3}

\subsection{\cite{jana22}'s Greedy \(\frac{1}{4}\)-approximation Algorithm}

\cite{jana22} proposed a greedy algorithm that yields a $\frac{1}{4}$-approximation, provided that the number of drones is at least the maximum number of conflicts any job. We describe the algorithm in this section:

\subsubsection{Definitions}
\begin{define}[Densities]
    The density of a job \(j\) is the ratio of its reward to its cost i.e. \(d_j = \frac{p_j}{c_j}\).
\end{define}
\begin{define}[Critical assignment]
    A critical (or overfull) assignment is  an assignment of jobs to drones that violates the battery constraint by 1 job. I.e., if the least-profit-dense job was removed, then the assignment would obey the drone's battery constraint.

    We define the $i$-th drone's assignment $S_i = \{I_1, I_2, \ldots I_{k}\} \subseteq I$ with densities, $d_1, d_2, \ldots, d_{k}$ as critical if the total battery cost $\mathcal{W}(S_i)$ of $S_i$ violates the drone's battery constraint, but the subset $S'_i = \{I_1, I_2, \ldots I_{k-1}\} $ obeys the battery constraint.
    
\end{define}

\subsubsection{Algorithm Description}
The greedy algorithm has the following high-level logic:
\begin{itemize}
    \item Find the maximum number of overlapping delivery intervals $(\Delta)$ and add that number of drones to our set of available drones.
    \item Greedily assign jobs to drones (in descending order of profit-cost density) until their batteries are overfull (note that the algorithm specifically looks to overfill drones and keeps them in consideration for new jobs even when their battery is exactly full).
    \item For the overfull drones, go back and remove low profit-density job(s) such that the battery constraint is not violated.
    \item Return the $m$ drones whose schedules yield the most profit.
\end{itemize}

\noindent
More specifically, the algorithm works as described in Algorithm \ref{alg:1}.

\begin{algorithm}[h]
\caption{GreedyAlgorithmMDSP}
\begin{algorithmic}[1]
\State \text{Construct interval graph for delivery intervals with max degree $\Delta$}
\State \text{Sort deliveries by profit-cost densities $(d)$ WLOG $d_1 \geq d_2 \geq 
\ldots d_n$ }
\State \text{Let $M = \{1, \dots, m+\Delta\}$ be the set of available drones and $M' = \varnothing$ be the set of critically-assigned drones}
\State \text{Let $W_i = \mathcal{W}(S_i)$ and $P_i = \mathcal{P}(S_i)$ be the total respective cost and reward of drone $i$'s current schedule $S_i$}
\For { $j \gets 1, n$}
\If {$ |M'| < m$ \textit{(fewer than $m$ drones have been critically assigned)}}
\State \text{Find $i \in M : I_k \cap I_j = \varnothing \; \forall \; I_k \in S_i$} \textit{(find drone whose schedule doesn't conflict with current job)}
\State {$W_i = W_i + c_j; S_i = S_i \cup \{I_j\}; P_i = P_i + p_j$ \textit{(update schedule)}}
\If {$W_i + c_j > B$ \textit{\; (battery constraint is violated by adding job)}}
\State {$M = M \setminus \{i\}; M' = M' \cup \{i\}; L_i = \{j\}$ \textit{(mark drone as critically-assigned, note last added job)}}
\EndIf
\Else
\State {\text{break} \textit{($m$ critically-assigned drones) }}
\EndIf
\EndFor
\For {\text{each $i \in M'$} \textit{(for all critically-assigned drones $i$)}}
\State {\text{Let $L_i$ denote the least-dense job in $S_i$ (which was added last)}} 
\State{\text{Let $\mathcal{P}(L_i)$ and $\mathcal{W}(L_i)$ be $L_i$'s reward and cost respectively}}
\If {$P_i - \mathcal{P}(L_i) \geq \mathcal{P}(L_i)$ \textit{(reward of $L_i$ is less than or equal to the reward of the rest of $S_i$)}}
\State {$W_i = W_i - \mathcal{W}(L_i); P_i = P_i - \mathcal{P}(L_i); S_i = S_i \setminus \{L_i\}$ \textit{(remove $L_i$ from $i$ to make $W_i \leq B$)}} 
\Else {\textit{ ($L_i$ has more reward than rest of $S_i$)}}
\State {$W_i = \mathcal{W}(L_i); P_i = \mathcal{P}(L_i); S_i = \{L_i\}$ \textit{($S_i$ now only contains $L_i$)}}
\EndIf
\EndFor
\State \text{Return $m$ most profitable drones in $M \cup M'$ with their respective schedules}
\end{algorithmic}
\label{alg:1}
\end{algorithm}

\subsubsection{Runtime and feasibility}

The algorithm runs in $O(n^2)$ time \citep{jana22} and is always guaranteed to return a feasible solution (if one exists) since we know there are only two ways a potential job scheduling can be infeasible:
\begin{itemize}
    \item \emph{Overlapping time intervals:} A drone's schedule can include jobs whose time intervals overlap each other and thus cannot be feasibly completed by the same drone
    \item \emph{Battery constraint:} A drone's schedule can include jobs that together require more battery capacity than the drone can support
\end{itemize}

\noindent
Now, since the algorithm goes through overfull drones and deletes the last-added job which made the drone overfull, no drone has an overfull schedule in the algorithm's assignment, which enforces the battery capacity constraint.

Since the greedy-assignment portion of the algorithm looks for a drone whose schedule does not have a scheduling conflict with the given job, we know that jobs are never assigned to drones in such a way that scheduled jobs have timing conflicts. Since a job can have at most $\Delta$ conflicts, there is always at least one drone available whose schedule does not conflict with the given job, so the algorithm will never generate a schedule with conflicts while assigning jobs. 

It's worth noting that, because this algorithm hinges on creating at least $m$ critically assigned drone schedules, we rely on the $\Delta$ extra drones we've added to be able to handle conflicts and critically assign drones (even though $m \geq \Delta$). Consider a situation where we only have $m$ drones. If we are still assigning jobs to drones, then we must have fewer than $m$ critically assigned drones. For any job $j$ we are trying to assign, we are guaranteed to have at least $m - \Delta$ drones that don't already hold conflicting jobs. However, there is no guarantee that those $m - \Delta$ drones are not already critically assigned. If we instead use $m + \Delta$ drones, then we are guaranteed to have $m + \Delta - \Delta = m$ drones that do not hold an overlapping job. Since we are still assigning jobs, we must have fewer than $m$ critically assigned drones, which implies we must have at least one suitable drone for us to place job $j$.

\subsubsection{Approximation Factor}

For $m$ drones and $\Delta$ maximum overlapping jobs, the algorithm proposed in \cite{jana22} yields a $\frac{m}{2(m + \Delta)}$ approximation. Let $\mathcal{S} = \{S_1, \dots, S_{m+\Delta}\}$ denote the family of assignments generated for $m+\Delta$ drones by steps 1-15 of the algorithm (prior to correction), including either $m$ overfull drones or all $n$ jobs assigned to $m+\Delta$ drones. Then, let $\mathcal{S}' = \{S_1', \dots, S_{m+\Delta}'\}$ denote the "corrected" family of assignments generated for $m+\Delta$ drones by the entire algorithm (by steps 16-25). Finally, we define $\mathcal{S}_A = \{S_{A_1}, \dots, S_{A_m}\}$ as the family of assignments comprising the $m$ highest-reward drone schedules in $\mathcal{S}'$. Now, we use the following theorems to prove the correctness of this approximation factor: 

\begin{theorem}
Given a drone $i$ with an overfull assignment $S_i$, denote the "corrected" assignment generated via steps 16-25 of the algorithm as $S_i'$. Then, we have $\mathcal{P}(S'_i) \geq \frac{\mathcal{P}(S_i)}{2}.$
\end{theorem}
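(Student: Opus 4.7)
The plan is a direct case analysis on the two branches of the correction loop (lines 18--22 of Algorithm \ref{alg:1}), which together with the algorithm's definition of $L_i$ completely determine $S_i'$ from $S_i$.

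First, I would observe that by construction $L_i$ is a single job in $S_i$, so $\mathcal{P}(S_i) = \mathcal{P}(S_i \setminus \{L_i\}) + \mathcal{P}(L_i)$, and the algorithm sets $S_i'$ to be either $S_i \setminus \{L_i\}$ or $\{L_i\}$, depending on which of the two pieces has larger reward. This framing makes the bound almost immediate: we are always keeping the heavier of the two complementary pieces of $S_i$, so we must keep at least half of $\mathcal{P}(S_i)$.

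More concretely, in Case 1 the algorithm enters the \textbf{if} branch, which triggers exactly when $\mathcal{P}(S_i) - \mathcal{P}(L_i) \geq \mathcal{P}(L_i)$, and sets $S_i' = S_i \setminus \{L_i\}$. Rearranging the branch condition yields $\mathcal{P}(L_i) \leq \mathcal{P}(S_i)/2$, and substituting gives $\mathcal{P}(S_i') = \mathcal{P}(S_i) - \mathcal{P}(L_i) \geq \mathcal{P}(S_i)/2$. In Case 2 the algorithm enters the \textbf{else} branch, i.e. $\mathcal{P}(S_i) - \mathcal{P}(L_i) < \mathcal{P}(L_i)$, and sets $S_i' = \{L_i\}$. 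The branch condition rearranges to $\mathcal{P}(L_i) > \mathcal{P}(S_i)/2$, so $\mathcal{P}(S_i') = \mathcal{P}(L_i) > \mathcal{P}(S_i)/2$. Combining the two cases yields $\mathcal{P}(S_i') \geq \mathcal{P}(S_i)/2$ as claimed.

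There is essentially no hard part here: the statement is an almost tautological consequence of the ``keep-the-larger-half'' decision rule in lines 18--22. The only subtlety worth verifying explicitly is that the two branches of the correction step do partition on exactly the inequality $\mathcal{P}(S_i) - \mathcal{P}(L_i) \geq \mathcal{P}(L_i)$, and that $S_i'$ in the second branch is feasible (since $c_{L_i} \leq B$ by the WLOG assumption stated earlier that all $c_j \leq B$). Both are immediate from the algorithm as written, so no further machinery is needed.
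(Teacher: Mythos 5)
Your proposal is correct and matches the paper's own argument, which is exactly the same ``keep the larger of the two complementary pieces'' case analysis (the paper just states it more tersely: since the smaller of the two choices is removed, at most half the profit is lost). Your explicit rearrangement of the branch condition and the remark on feasibility of $\{L_i\}$ are fine elaborations but not a different route.
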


\noindent
\begin{proof}
If the least profit-dense job has profit greater than the rest of the jobs combined, the algorithm will keep that job and delete the rest. Otherwise, it will delete the one least profit-dense job. Given that the smaller of two choices is removed, the algorithm removes at most half of the drone's profit. \end{proof}

\begin{theorem}
\label{thm:2}
$\mathcal{P}(\mathcal{S}) \geq OPT$, i.e. the profit of the greedily assigned family of assignments pre-correction is at least OPT.
\end{theorem}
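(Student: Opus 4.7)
The plan is to split the analysis by why the first greedy loop terminates. In the easy case the loop exits by exhausting all $n$ jobs, meaning fewer than $m$ drones ever became critical; then every job appears in some $S_i \in \mathcal{S}$, so $\mathcal{P}(\mathcal{S}) = \sum_{j=1}^{n} p_j$, which trivially dominates any feasible family of assignments and in particular $OPT$.

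The substantive case is when the loop breaks because $|M'| = m$. Letting $k$ be the index of the last job processed, I would first argue that $\mathcal{S}$ contains the full density-sorted prefix $\{I_1,\ldots,I_k\}$, i.e.\ no processed job was dropped for lack of a suitable drone. This reuses the feasibility counting already given for Algorithm~\ref{alg:1}: with fewer than $m$ drones critically-assigned out of $m+\Delta$, at least $\Delta+1$ non-critical drones remain, and one of them must be conflict-free with the current $I_j$ because at most $\Delta$ drones can simultaneously hold a job active at any given instant. I expect this step to be the main friction point, since the ``non-conflicting drone'' bound really concerns overlap at a single point in time rather than pairwise conflict with $I_j$, and it has to be reconciled with the density-order in which greedy populated the other drones.

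Next I would set up an exchange argument at the level of the whole instance. By the definition of \emph{critical}, the $m$ critically-assigned drones in $\mathcal{S}$ have total cost strictly greater than $mB$, so $\sum_{j=1}^{k} c_j > mB$. Writing $OPT$ also for the set of jobs picked by the optimal family of assignments, feasibility of $OPT$ gives $\sum_{j \in OPT} c_j \leq mB$. Partition the jobs into $A = \{1,\ldots,k\}\cap OPT$, $B_g = \{1,\ldots,k\}\setminus OPT$, and $C = OPT \setminus \{1,\ldots,k\}$; subtracting $\sum_A c_j$ from both inequalities yields $\sum_{j\in B_g} c_j > \sum_{j\in C} c_j$.

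Finally, because greedy processes jobs in decreasing order of density $d_j = p_j/c_j$, every job in $B_g$ has density at least $d_k$, while every job in $C \subseteq \{k+1,\ldots,n\}$ has density at most $d_k$. Writing $p_j = d_j c_j$ and inserting this density gap into the cost inequality gives $\sum_{B_g} p_j \geq d_k \sum_{B_g} c_j > d_k \sum_C c_j \geq \sum_C p_j$. Adding $\sum_A p_j$ to both sides then yields $\mathcal{P}(\mathcal{S}) \geq \mathcal{P}(OPT) = OPT$, closing the main case. Apart from the feasibility subtlety flagged above, this is the standard fractional-vs-integer-knapsack comparison, lifted from a single knapsack to the aggregate capacity $mB$ across all $m$ drones.
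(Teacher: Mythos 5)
Your proposal is correct and follows essentially the same route as the paper's proof: the same two-case split on how the greedy loop terminates, the same reliance on greedy having captured the full density-sorted prefix, and the same core comparison that the $m$ critical drones carry total cost exceeding $mB$ while any feasible family (in particular $OPT$) carries at most $mB$, combined with the density ordering to convert the cost gap into a profit gap. The only real difference is presentational: you argue directly via the pivot density $d_k$ (the standard knapsack exchange), whereas the paper argues by contradiction through aggregate densities $d(\mathcal{A}\setminus\mathcal{B}) \geq d(\mathcal{B}\setminus\mathcal{A})$ --- your explicit $d_k$ bound is in fact the cleaner justification of that step, and the prefix-property subtlety you flag is likewise present (and glossed over) in the paper's version.
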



\noindent
\begin{proof}
We present a proof of this theorem adapted from the one provided in \cite{jana22}. We split the statement into two cases: one when $\mathcal{S}$ has $m$ critically-assigned/overfull drones, and one when $\mathcal{S}$ assigns all $n$ jobs to $m+\Delta$ drones before $m$ drones are overfull. 

For the case where $\mathcal{S}$ has $m$ critically-assigned/overfull drones, let us denote $\mathcal{A}$ as the set of all jobs in $\mathcal{S}$, and $\mathcal{B}$ as the set of all jobs in the optimal job assignment. We also define a density function of a set of jobs as follows: $d(X) = \frac{\mathcal{P}(X)}{\mathcal{W}(X)}$. We shall use the manner of construction of the two families of assignments to show a contradiction in the properties of the cost function of the two families of assignments. 

Suppose for the sake of contradiction that $\mathcal{P}(\mathcal{S}) < OPT$, which implies that $\mathcal{P}(\mathcal{A}) < \mathcal{P}(\mathcal{B})$, and hence indicates that $\mathcal{P}(\mathcal{A} \setminus \mathcal{B}) < \mathcal{P}(\mathcal{B} \setminus \mathcal{A})$. Since the greedy algorithm assigns jobs greedily based on profit-cost density, $\mathcal{S}$ already contains the densest jobs that could be used to overfill $m$ drones of $B$ battery capacity each. This statement holds because $\mathcal{S}$ includes $m$ overfull drones as well as all jobs denser than the jobs assigned to the overfull drones. Hence, $d(\mathcal{A}) \geq d(\mathcal{B})$, and thus $d(\mathcal{A} \setminus \mathcal{B}) \geq d(\mathcal{B} \setminus \mathcal{A})$. Since density is profit divided by cost, $d(\mathcal{A} \setminus \mathcal{B}) \geq d(\mathcal{B} \setminus \mathcal{A})$ and $\mathcal{P}(\mathcal{A} \setminus \mathcal{B}) < \mathcal{P}(\mathcal{B} \setminus \mathcal{A})$ together imply  $\mathcal{W}(\mathcal{A} \setminus \mathcal{B}) < \mathcal{W}(\mathcal{B} \setminus \mathcal{A}) \implies \mathcal{W}(\mathcal{A}) < \mathcal{W}(\mathcal{B})$. 

We note that the drones in $\mathcal{S}$ are critically assigned, and thus $\mathcal{W}(\mathcal{S}) > mB$. Similarly, the drones in $\mathcal{T}$ are feasibly assigned, and thus $\mathcal{W}(\mathcal{T}) \leq mB$. Thus, $\mathcal{W}(\mathcal{S}) < \mathcal{W}(\mathcal{T})$ is a contradiction, and hence $\mathcal{P}(\mathcal{S}) \geq OPT$ must hold.

We can prove the second case, where all the jobs are distributed between $m+\Delta$ drones by seeing that when the algorithm terminates after assigning all jobs, the total profit of the assigned jobs (across all of the $m + \Delta$ drones) is equal to the total profit of all jobs. Since the maximum profit is at most the profit of all jobs, so $\mathcal{P}(\mathcal{S}) = \mathcal{P}(\mathcal{T})$. Hence, for both cases, we have $\mathcal{P}(\mathcal{S}) \geq OPT$.
\end{proof}

\begin{theorem}
    The greedy algorithm is $\frac{1}{4}$-approximate.
    \label{thm:3}
\end{theorem}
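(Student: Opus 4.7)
The plan is to chain together the two preceding theorems and then finish with an averaging argument over the $m+\Delta$ drones, exactly as suggested by the approximation ratio $\frac{m}{2(m+\Delta)}$ mentioned at the start of the subsection.

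First I would pass from $\mathcal{S}$ to the corrected family $\mathcal{S}'$. For every critically-assigned drone $i \in M'$, Theorem~1 gives $\mathcal{P}(S_i') \geq \tfrac{1}{2}\mathcal{P}(S_i)$, and for every non-critical drone the correction phase does nothing so $\mathcal{P}(S_i') = \mathcal{P}(S_i) \geq \tfrac{1}{2}\mathcal{P}(S_i)$. Summing over all $m+\Delta$ drones yields $\mathcal{P}(\mathcal{S}') \geq \tfrac{1}{2}\mathcal{P}(\mathcal{S})$, which combined with Theorem~\ref{thm:2} gives $\mathcal{P}(\mathcal{S}') \geq \tfrac{1}{2}\mathrm{OPT}$.

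Next I would pass from $\mathcal{S}'$ to the returned family $\mathcal{S}_A$, which consists of the $m$ most profitable drones among the $m+\Delta$. A standard averaging argument applies: the $m$ drones with largest reward carry at least an $\tfrac{m}{m+\Delta}$ fraction of the total, since picking the top $m$ out of $m+\Delta$ is at least as good as taking $m$ times the average. Hence
\begin{equation*}
\mathcal{P}(\mathcal{S}_A) \;\geq\; \frac{m}{m+\Delta}\,\mathcal{P}(\mathcal{S}') \;\geq\; \frac{m}{2(m+\Delta)}\,\mathrm{OPT}.
\end{equation*}

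Finally I would invoke the standing assumption of this section, namely $\Delta \leq m$, to conclude $\frac{m}{2(m+\Delta)} \geq \frac{m}{2\cdot 2m} = \frac{1}{4}$, which gives $\mathcal{P}(\mathcal{S}_A) \geq \tfrac{1}{4}\mathrm{OPT}$ as required. I do not expect a serious obstacle here since both heavy ingredients (the halving loss from correction and the lower bound $\mathcal{P}(\mathcal{S}) \geq \mathrm{OPT}$) are already in hand; the only point that warrants a sentence of care is the averaging step, which requires noting that we are genuinely choosing the $m$ \emph{largest} schedules and not an arbitrary subset, and the reminder that the bound $\Delta \leq m$ is what converts the generic $\tfrac{m}{2(m+\Delta)}$ ratio into the advertised $\tfrac{1}{4}$.
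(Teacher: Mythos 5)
Your proposal is correct and follows essentially the same argument as the paper: it chains Theorem 1 (the halving bound on corrected drones), Theorem \ref{thm:2} ($\mathcal{P}(\mathcal{S}) \geq OPT$), and the averaging argument over the $m+\Delta$ drones to obtain $\mathcal{P}(\mathcal{S}_A) \geq \frac{m}{2(m+\Delta)}\,OPT$, then uses $\Delta \leq m$ to conclude. The only difference is the order in which the halving and averaging steps are applied, which does not change the substance.
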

\noindent
\begin{proof}
Since $\mathcal{S}_A$ comprises the $m$ highest-reward drones from $\mathcal{S}'$, the total reward of $\mathcal{S}_A$, denoted by $\mathcal{P}(\mathcal{S}_A)$, is lower-bounded by the average profit of the $\mathcal{S}'$ family of assignments for $m$ drones. Assuming that we had $t \leq m$ overfull drones that were "corrected" by the algorithm in steps 16-25, we thus have the inequality in (\ref{eq:2}) as per \cite{jana22}:

\begin{equation} \mathcal{P}(S_A) \geq \frac{m}{m+\Delta} \left(\sum_{i=1}^t \mathcal{P}(S_i') + \sum_{i=t+1}^{m+\Delta} \mathcal{P}(S_i)\right) \geq \frac{m}{2(m+\Delta)} \left(\sum_{i=1}^{m+\Delta} \mathcal{P}(S_i)\right) \geq \frac{m\cdot OPT}{2(m+\Delta)}
\label{eq:2}
\end{equation}

\noindent
This provides a $\frac{1}{4}$-approximation algorithm if $\Delta \leq m$.
\end{proof}

\subsection{A Modified $\frac{1}{3}$-approximate Greedy Algorithm for MDSP}

Our modification to \cite{jana22}'s approximation algorithm involves changing the number of drones we have initially from $m+\Delta$ to $2m + \Delta$ to improve the approximation factor. This is because a key driver of the approximation factor of \cite{jana22}'s algorithm is the step to remove from each $S_i$ in $\mathcal{S} = \{S_1, S_2, \dots, S_{m+\Delta}\}$ either the last-added, least-dense job ($L_i$) or the rest of the jobs ($S_i \setminus \{L_i\}$). Thus, after the "correction" step, we have:

\begin{equation}
\mathcal{P}(\mathcal{S}) = \sum_{i=1}^{m+\Delta} \mathcal{P}(S_i) \geq \mathcal{P}(\mathcal{S}') = \sum_{i=1}^{m+\Delta} \mathcal{P}(S'_i) \geq \frac{\mathcal{P}(\mathcal{S})}{2}
\end{equation}

We propose a modification to \cite{jana22}: instead of deleting these jobs during the correction phase, we store them in $m$ additional, separate drones that are not used in steps 1-15 of the algorithm and are only brought into play for steps 16-25. Since we have at most $m$ critically assigned drones, we originally deleted either of the following two quantities:

\begin{itemize}
    \item the least-dense job $L_i$, where we know $\mathcal{W}(L_i) \leq B$
    \item the rest of the schedule $S_i \setminus \{L_i\}$, where we know $\mathcal{W}(S_i \setminus \{L_i\}) \leq B$ as $L_i$ was the job that made drone $i$ critically-assigned.
\end{itemize}

\noindent
Hence, in the correction phase of our algorithm, we will have to relocate at most $m$ quantities, each having cost less than or equal to $B$. This can easily be done using our $m$ new, empty drones, and so we will never have to discard any jobs. Not having to discard jobs will lead us to a tighter approximation bound because we can avoid the conservative assumption that the discarded jobs are at worst half of the pre-corrected drone profit. We present the algorithm formally as Algorithm \ref{alg:2}.

\begin{algorithm}[h]
    \caption{ModifiedGreedyAlgorithmMDSP}
    \begin{algorithmic}[1]
    \State \text{Construct interval graph for delivery intervals with max degree $\Delta$}
    \State \text{Sort deliveries by profit-cost densities $(d)$ WLOG $d_1 \geq d_2 \geq 
    \ldots d_n$ }
    \State \text{Let $M = \{1, \dots, 2m+\Delta\}$ be the set of available drones and $M' = \varnothing$ be the set of critically-assigned drones}
    \State \text{Let $W_i = \mathcal{W}(S_i)$ and $P_i = \mathcal{P}(S_i)$ be the total respective cost and reward of drone $i$'s current schedule $S_i$}
    \For { $j \gets 1, n$}
    \If {$ |M'| < m$ \textit{(fewer than $m$ drones have been critically assigned)}}
    \State \text{Find $i \in M : I_k \cap I_j = \varnothing \; \forall \; I_k \in S_i$} \textit{(find drone whose schedule doesn't conflict with current job)}
    \State {$W_i = W_i + c_j; S_i = S_i \cup \{I_j\}; P_i = P_i + p_j$ \textit{(update schedule)}}
    \If {$W_i + c_j > B$ \textit{\; (battery constraint is violated by adding job)}}
    \State {$M = M \setminus \{i\}; M' = M' \cup \{i\}; L_i = \{j\}$ \textit{(mark drone as critically-assigned, note last added job)}}
    \EndIf
    \Else
    \State {\text{break} \textit{($m$ critically-assigned drones) }}
    \EndIf
    \EndFor
    \For {\text{each $i = 1, 2, \dots, |M'|$} \textit{(for all critically-assigned drones $i$ labelled by the corresponding integer)}}
    \State {\text{Let $L_i$ denote the least-dense job in $S_i$ (which was added last)}} 
    \State{\text{Let $\mathcal{P}(L_i)$ and $\mathcal{W}(L_i)$ be $L_i$'s reward and cost respectively}}
    \If {$P_i - \mathcal{P}(L_i) \geq \mathcal{P}(L_i)$ \textit{(reward of $L_i$ is less than or equal to the reward of the rest of $S_i$)}}
    \State {$W_i = W_i - \mathcal{W}(L_i); P_i = P_i - \mathcal{P}(L_i); S_i = S_i \setminus \{L_i\}; S_{m+\Delta+i} = \{L_i\}$ \textit{(remove and reassign $L_i$ from $i$ to make $W_i \leq B$)}} 
    \Else {\textit{ ($L_i$ has more reward than rest of $S_i$)}}
    \State {$W_i = \mathcal{W}(L_i); P_i = \mathcal{P}(L_i); S_{m+\Delta+i} = S_i\setminus\{L_i\}; S_i = \{L_i\}$ \textit{($S_i$ now only contains $L_i$)}}
    \EndIf
    \EndFor
    \State \text{Return $m$ most profitable drones in $M \cup M'$ with their respective schedules}
    \end{algorithmic}
    \label{alg:2}
    \end{algorithm}

Since we simply reassign the discarded quantities to a new drone in constant time instead of deleting them, the runtime of the algorithm stays $O(n^2)$ as we only increase the runtime by $O(|M'|)<O(n^2)$ \citep{jana22}. Using the same notation as earlier, let $\mathcal{S} = \{S_1, \dots, S_{m+\Delta}\}$ denote the family of assignments generated for $m+\Delta$ drones by our modified algorithm (prior to correction), including either $m$ overfull drones or all $n$ jobs assigned to $m+\Delta$ drones. Then, let $\mathcal{S}^{*} = \{S_1^{*}, \dots, S_{2m+\Delta}^{*}\}$ denote the "corrected" family of assignments generated for $2m+\Delta$ drones by our modified algorithm. Finally, we define $\mathcal{S}^*_A = \{S^*_{A_1}, \dots, S^*_{A_m}\}$ as the family of assignments comprising the $m$ highest-reward drone schedules in $\mathcal{S}^{*}$. We now show that the total reward from our family of assignments for $2m+\Delta$ drones is greater than the reward from the optimal solution in Theorem \ref{thm:4}, and then use the theorem to prove that our algorithm is a $\frac{1}{3}$-approximation algorithm.

\begin{theorem}
\label{thm:4}
Given our new family of assignments $\mathcal{S}^*$, $\mathcal{P}(\mathcal{S}^*) \geq OPT.$
\end{theorem}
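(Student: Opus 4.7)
The plan is to reduce Theorem \ref{thm:4} directly to Theorem \ref{thm:2} by observing that the modification from Algorithm \ref{alg:1} to Algorithm \ref{alg:2} replaces every \emph{deletion} in the correction phase with a \emph{relocation} onto one of the $m$ fresh drones indexed $m+\Delta+1, \dots, 2m+\Delta$. Since no job is ever discarded, the multiset of jobs assigned across $\mathcal{S}^*$ equals the multiset of jobs assigned across the pre-correction family $\mathcal{S}$, and therefore $\mathcal{P}(\mathcal{S}^*) = \mathcal{P}(\mathcal{S})$. Once this equality is established, Theorem \ref{thm:2} applied to $\mathcal{S}$ immediately yields $\mathcal{P}(\mathcal{S}^*) \geq OPT$.

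First I would make the profit-preservation step precise: for each critically-assigned drone $i \in M'$, lines 18--23 of Algorithm \ref{alg:2} partition $S_i$ (the overfull schedule) into two pieces, $\{L_i\}$ and $S_i \setminus \{L_i\}$, keep one piece on drone $i$, and place the other on the previously-empty drone $m+\Delta+i$. Summing over all $i \in M'$ and noting that drones outside $M'$ are untouched by the correction loop, I get $\sum_{i=1}^{2m+\Delta} \mathcal{P}(S_i^*) = \sum_{i=1}^{m+\Delta} \mathcal{P}(S_i)$, i.e.\ $\mathcal{P}(\mathcal{S}^*) = \mathcal{P}(\mathcal{S})$.

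Next I would verify that $\mathcal{S}^*$ is a legitimate family of assignments (compatible and feasible), since $\mathcal{P}(\mathcal{S})$ is only meaningful as a benchmark for $OPT$ when $\mathcal{S}^*$ represents a real schedule. Compatibility is immediate: the jobs moved to drone $m+\Delta+i$ were previously co-scheduled on drone $i$, so they are pairwise non-conflicting. Feasibility reduces to checking that each relocated piece fits within battery capacity $B$: if the piece is $\{L_i\}$, then $\mathcal{W}(L_i) = c_{L_i} \leq B$ by the WLOG assumption that no single job exceeds $B$; if the piece is $S_i \setminus \{L_i\}$, then by the definition of a critical assignment (the removal of the last-added job restores the battery constraint), $\mathcal{W}(S_i \setminus \{L_i\}) \leq B$.

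Finally I would invoke Theorem \ref{thm:2}, which applies verbatim to our $\mathcal{S}$ since the greedy-assignment phase (lines 5--15) is identical to that in Algorithm \ref{alg:1} apart from the enlarged drone pool, yielding $\mathcal{P}(\mathcal{S}) \geq OPT$ and hence $\mathcal{P}(\mathcal{S}^*) \geq OPT$. The only mild subtlety, and the step most worth double-checking, is the feasibility verification above: one must confirm that each of the at most $m$ reassignments lands on a \emph{distinct} fresh drone (which is ensured by indexing the target by $m+\Delta+i$ for each critically-assigned $i$), so that no battery constraint is inadvertently violated by piling two relocated pieces onto one drone.
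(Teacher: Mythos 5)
Your proposal is correct and follows essentially the same route as the paper: establish $\mathcal{P}(\mathcal{S}^*) = \mathcal{P}(\mathcal{S})$ because the correction phase relocates rather than deletes jobs, then invoke Theorem~\ref{thm:2} to conclude $\mathcal{P}(\mathcal{S}^*) \geq OPT$. Your additional feasibility and compatibility checks (each relocated piece has cost at most $B$ and lands on a distinct fresh drone) are a welcome elaboration of points the paper makes in the prose preceding Algorithm~\ref{alg:2} rather than inside the proof itself.
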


\noindent
\begin{proof}
We know that the total cost and the total reward for the $2m+\Delta$ drones in $\mathcal{S}^*$ is exactly the same as the cost and the reward of $\mathcal{S}$ (the pre-correction family of assignments of \cite{jana22}'s algorithm), since none of the jobs in $\mathcal{S}$ were ever removed while constructing $\mathcal{S}^*$, merely moved to $m$ additional, different drones. Thus, from Theorem \ref{thm:2}, we have $\mathcal{P}(\mathcal{S}^*) = \mathcal{P}(\mathcal{S}) \geq OPT$.
\end{proof}

\begin{theorem}
\label{thm:5}
Our modified greedy algorithm is $\frac{1}{3}$-approximate.
\end{theorem}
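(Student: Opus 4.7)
The plan is to apply the same averaging argument used in the proof of Theorem \ref{thm:3}, but exploit the strictly stronger guarantee given by Theorem \ref{thm:4}: no profit is lost during the correction phase, so the full pre-correction profit is preserved across $2m+\Delta$ drones rather than $m+\Delta$. First I would verify that $\mathcal{S}^*$ is a valid family of compatible and feasible assignments. The original $m+\Delta$ schedules remain compatible since steps 5--15 only add non-conflicting jobs, and after correction each is feasible because we keep exactly one of $\{L_i\}$ or $S_i \setminus \{L_i\}$, both of which have cost at most $B$ (the former by the WLOG assumption $c_j \leq B$, the latter because $L_i$ was the job that made $S_i$ critical). The $m$ new drones each receive either $\{L_i\}$ or $S_i \setminus \{L_i\}$, which are compatible (any subset of a compatible set is compatible) and feasible by the same cost bound.

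Next I would apply the averaging step. Since $\mathcal{S}^*_A$ consists of the $m$ highest-reward drone schedules among the $2m+\Delta$ schedules in $\mathcal{S}^*$, their total reward is at least the average times $m$:
\begin{equation}
\mathcal{P}(\mathcal{S}^*_A) \;\geq\; \frac{m}{2m+\Delta} \, \mathcal{P}(\mathcal{S}^*).
\end{equation}
Invoking Theorem \ref{thm:4} gives $\mathcal{P}(\mathcal{S}^*) \geq OPT$, and under the standing assumption $\Delta \leq m$ we have $2m+\Delta \leq 3m$, hence $\frac{m}{2m+\Delta} \geq \frac{1}{3}$. Chaining these inequalities yields
\begin{equation}
\mathcal{P}(\mathcal{S}^*_A) \;\geq\; \frac{m}{2m+\Delta}\, OPT \;\geq\; \frac{1}{3}\, OPT,
\end{equation}
which is the desired $\frac{1}{3}$-approximation.

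There is no real obstacle here; the hard work has already been absorbed into Theorem \ref{thm:4}. The only point that requires a moment's care is confirming that each of the $m$ auxiliary drones can legitimately absorb a discarded sub-schedule, i.e., that both $\{L_i\}$ and $S_i \setminus \{L_i\}$ individually satisfy the battery and compatibility constraints. Once that bookkeeping is done, the improvement from $\frac{1}{4}$ to $\frac{1}{3}$ follows entirely from replacing the factor of $\frac{1}{2}$ loss inherited from Theorem 1 with the lossless rearrangement argument of Theorem \ref{thm:4}, combined with the slightly weaker averaging coefficient $\frac{m}{2m+\Delta}$ in place of $\frac{m}{m+\Delta}$.
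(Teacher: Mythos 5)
Your proposal is correct and follows essentially the same route as the paper: the averaging bound $\mathcal{P}(\mathcal{S}^*_A) \geq \frac{m}{2m+\Delta}\mathcal{P}(\mathcal{S}^*)$ combined with Theorem \ref{thm:4} and the assumption $\Delta \leq m$. Your explicit check that each auxiliary drone's sub-schedule ($\{L_i\}$ or $S_i \setminus \{L_i\}$) is individually feasible and compatible is a useful piece of bookkeeping that the paper handles in the prose preceding the algorithm rather than inside the proof itself.
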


\noindent
\begin{proof}
We know that $\mathcal{P}(\mathcal{S}^*) = \mathcal{P}(\mathcal{S})$, as none of the jobs in $\mathcal{S}$ were ever removed while constructing $\mathcal{S}^*$. We note that this property does not hold for $\mathcal{P}(\mathcal{S}')$, i.e. $\mathcal{P}(\mathcal{S}') \neq \mathcal{P}(\mathcal{S})$ as \cite{jana22}'s algorithm deletes at least one job from every critically-assigned drone, which leads to $\mathcal{P}(\mathcal{S}) \geq \mathcal{P}(\mathcal{S}') \geq \frac{\mathcal{P}(\mathcal{S})}{2}$. Now, we use Theorem \ref{thm:4} and bound $\mathcal{P}(\mathcal{S}_A^*)$ by average profits to obtain the inequality in (\ref{eq:3}):

\begin{equation}
\mathcal{P}(\mathcal{S}_A^*) \geq \frac{m \cdot \mathcal{P}(\mathcal{S}^*)}{2m+\Delta} = \frac{m\cdot \mathcal{P}(\mathcal{S})}{2m+\Delta} \geq \frac{m\cdot OPT}{2m+\Delta}.
\label{eq:3}
\end{equation}

\noindent
Thus, our modified greedy algorithm is a $\frac{1}{3}$-approximation algorithm if $\Delta \leq m$.
\end{proof}

\section{Polynomial-Time Exact Algorithm for Unit Cost MDSP (Constant-$m$)}
\label{sec:4}

One significant challenge of the MSDP problem is the variability of delivery costs (e.g., no bounds, no guaranteed connection to delivery time). In real life, many delivery scenarios would have more predictable delivery costs. For example, in many urban or suburban areas, the front door of a building is within some reliable distance of the street and is at roughly the same elevation as the street. In these situations, the battery cost of each delivery would be approximately equal. We can model these scenarios with unit-cost deliveries. If we assume that the cost of every job is 1, 
we can solve the problem in polynomial time for constant $m$. 

In this section, we show that if all jobs have unit cost, we can solve MDSP in polynomial time assuming that $m$ is a constant. The algorithm we develop to prove this statement will use dynamic programming.  We first describe the subproblems for this algorithm and then define the Maximum Profit Fitting Family of Assignments (MPFFA), the maximum-profit family of assignments that solves our subproblem. Then, we state and prove a lemma that proves that we can construct certain families of assignments that are MPFFAs from other MPFFAs, allowing us to connect our subproblems. After proving the lemma, we will use it to construct MPFFAs for all our subproblems and show that the optimal family of assignments is the maximum-profit MPFFA across all subproblems. Finally, we prove that the runtime of our algorithm is polynomial in the number of jobs.

First, assume that $B > n$.  In this case, because the cost of each job is 1, and there are $n$ jobs total, it will never be possible to hit the cost constraint.  Now, assume that $B \leq n$. In this case, because each job has unit cost, the only possible total costs on each drone are integers between 0 and $n$.  Therefore, no matter what the value of $B$ is, the only possible values of the total cost assigned to a given drone is an integer between $0$ and $\min(n,B)$.\par 
Using that knowledge, we can define our subproblems in terms of the total cost taken by each drone. Let $S_i$ be the assignment to drone $i$. Let $t_i$ be the time at which the latest job assigned to drone $i$ finishes, i.e. $t_i$ is the end time $t_j^R$ of a job $j$ such that $t_j^R > t_k^R$ for any other job $k$ in drone $i$'s schedule $S_i$. Let $\mathcal{W}(S_i)$ be the total cost of the jobs in assignment $S_i$.  As such, we can define each subproblem $Q$ as:  
\begin{equation}
Q = [t_1... t_m,\mathcal{W}(S_1)...\mathcal{W}(S_m)]
\end{equation}
In other words, the number of subproblems we have for this dynamic programming problem is every single combination of every latest job time on every drone and every total cost on every drone.\par 

Next, we need to define some terms:
\begin{define}[Fitting]
    A family of assignments $A$ is a \textbf{\emph{fitting}} family of assignments for a subproblem $Q = [t_1... t_m,\mathcal{W}(S_1)...\mathcal{W}(S_m)]$ if for each drone $i$ in $A$, the latest endtime of any job assigned to that drone is $t_i$ and the total cost of jobs assigned to drone $i$ is $\mathcal{W}(S_i)$.
\end{define}
\begin{define}[Maximum Profit Fitting Family of Assignments (MPFFA)]
    A \textbf{\emph{maximum profit fitting family of assignments (MPFFA)}} for a subproblem $Q$ is a fitting family of assignments $A$ for $Q$ such that the profit $\mathcal{P}(A)$ of $A$ is greater than or equal to the profit of any other fitting family of assignments for $Q$.
\end{define}
\begin{define}[Top]
    Let $\;Q = [t_1... t_m,\mathcal{W}(S_1)...\mathcal{W}(S_m)]$ be a subproblem.  Let $L_Q = \{t_1... t_m\}$ be the set containing the latest endtime of every drone in $Q$.  Let $t_Q = \max_{t_i\in L_Q}(t_i)$ be the latest endtime in $L_Q$.  We can thus define the \textbf{top} of $Q$, $T_Q$, as $T_Q =  \{i\leq n\;|\;t_i = t_Q\}$, or the set of drones whose latest endtime is equal to $t_Q$.
\end{define}
Using these terms, we can now state the following lemma:
\begin{lemma}
    Let $Q = [t_1... t_m,\mathcal{W}(S_1)...\mathcal{W}(S_M)]$ be a subproblem, and let $A$ be an MPFFA for $Q$.  Let $J$ be the set of jobs $j_i$ assigned in $A$ such that the endtime $t_{j_i}^R$ of $j_i$ is $t_i$, for every $i\in T_Q$, where $T_Q$ is the top of $Q$.  Remove every job in $J$ from $A$ to get a new job assignment $A'$.  Let $Q'$ be the subproblem for which $A'$ is a fitting family of assignments.  Then, $A'$ is an MPFFA for $Q'$.
\end{lemma}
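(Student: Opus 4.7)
The plan is to prove the lemma by contradiction using an exchange argument. Suppose $A'$ is not an MPFFA for $Q'$, so there is a fitting family of assignments $A''$ for $Q'$ with $\mathcal{P}(A'') > \mathcal{P}(A')$. I will construct a family $A'' \cup J$ by reinserting each $j_i \in J$ onto drone $i$, show this is a valid fitting family for $Q$, and derive $\mathcal{P}(A'' \cup J) > \mathcal{P}(A)$, contradicting that $A$ is an MPFFA for $Q$.

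The first, and main, step is to argue that $A''$ contains no job from $J$, so $A'' \cup J$ is well-defined. Every $j_i \in J$ has endtime $t_{j_i}^R = t_Q$. If $j_i$ were assigned to some drone $i'$ in $A''$, then drone $i'$'s latest endtime in $A''$ would be at least $t_Q$. But every drone's latest endtime in $Q'$ is strictly less than $t_Q$: for $i \in T_Q$, removing $j_i$ from $A$ drops drone $i$'s latest endtime to some $t_i' < t_Q$ (the next-latest endtime on drone $i$ in $A$ must be strictly smaller, since two distinct jobs sharing endtime $t_Q$ would have overlapping intervals and could not both sit on drone $i$); for $i \notin T_Q$, drone $i$'s latest endtime is already $t_i < t_Q$ by definition of $T_Q$. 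Hence $A''$ cannot use any job of $J$.

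Next I would verify compatibility and feasibility of $A'' \cup J$. For $i \notin T_Q$, drone $i$ is unchanged. For $i \in T_Q$, drone $i$'s jobs in $A''$ all have endtime $\leq t_i'$, and in the original $A$ the job preceding $j_i$ on drone $i$ ended at $t_i' \leq t_{j_i}^L$ (otherwise it would conflict with $j_i$); therefore every job currently on drone $i$ in $A''$ ends by $t_{j_i}^L$ and cannot overlap the interval $(t_{j_i}^L, t_Q)$ of $j_i$. For the battery constraint, drone $i$'s cost in $A''$ is $\mathcal{W}(S_i) - 1$ (since $Q'$ reduced drone $i$'s cost by the unit cost of $j_i$), so after adding $j_i$ back the cost returns to $\mathcal{W}(S_i) \leq B$, which was feasible in $A$. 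By construction each drone's latest endtime and total cost in $A'' \cup J$ match those prescribed by $Q$, so $A'' \cup J$ is a fitting family for $Q$.

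Finally, since $J$ is disjoint from the jobs used by $A''$, we have $\mathcal{P}(A'' \cup J) = \mathcal{P}(A'') + \mathcal{P}(J) > \mathcal{P}(A') + \mathcal{P}(J) = \mathcal{P}(A)$, contradicting the assumption that $A$ is an MPFFA for $Q$. The most delicate point is the first step -- ruling out that $A''$ reuses any top job on a different drone -- which relies essentially on $t_Q$ being strictly larger than every latest endtime appearing in $Q'$; the remaining verifications are routine because unit costs and the construction itself preserve all the fitting parameters.
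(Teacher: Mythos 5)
Your proof is correct and takes essentially the same route as the paper: assume a better fitting family for $Q'$, show it cannot contain any job of $J$ because every job in $J$ ends at $t_Q$ while every latest endtime in $Q'$ is strictly earlier, reinsert $J$, and contradict the optimality of $A$. Your write-up is in fact slightly more careful than the paper's, since you also verify interval compatibility and the battery constraint when adding $J$ back, steps the paper takes for granted.
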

In other words, if we remove the last job from every drone in the top of an MPFFA, then the resulting family of assignments is also an MPFFA.  Now, we will prove this lemma.
\begin{proof}
Assume by contradiction that $A'$ is not an MPFFA for $Q'$.  That means there exists some fitting family of assignments $D'$ for $Q'$ such that $\mathcal{P}(D') > \mathcal{P}(A')$.  \par 
First, we will show that any job in $J$ cannot have been assigned in $D'$.  By how $J$ is defined, every job $j\in J$ ends at time $t_Q$, and $t_Q$ is later than the endpoint of every job in $A$.  Thus, by definition, there cannot exist a job in $A$ that ends later than $t_Q$, which means that any job in $A$ which ends at $t_Q$ must be in $J$.  Thus, when every job in $J$ is removed to get $A'$, it means that every job ending at $t_Q$ was removed, and since no jobs ending after $t_Q$ can be in $A$ or $A'$, we get that the latest ending time of any job in $A'$ must be some $t_{Q'} < t_Q$.  This means that every drone in a fitting family of assignments for $Q'$ must have latest endpoints $t_{Q'}$ or earlier, which means no fitting family of assignments for $Q$ can contain a job ending at $t_Q$.  Therefore, no job in $J$ is assigned in $D'$.  \par 
Because no job in $J$ can be assigned in $D'$, we can make a fitting family of assignments $D$ for $Q$ by adding every job in $J$ to $D'$.  This family of assignments will have profit:
\begin{equation}
\mathcal{P}(D) = \mathcal{P}(D') + \sum_{j\in J}p_j
\end{equation}
Since $A$ can formed by adding every job in $J$ to $A'$, this means:
\begin{equation}
\mathcal{P}(A) = \mathcal{P}(A') + \sum_{j\in J}p_j
\end{equation}
Now, because we know that $\mathcal{P}(D') > \mathcal{P}(A')$, we have the following inequalities:
\begin{equation}
\mathcal{P}(D') > \mathcal{P}(A')
\end{equation}
\begin{equation}
\mathcal{P}(D') + \sum_{j\in J}p_j > \mathcal{P}(A') + \sum_{j\in J}p_j
\end{equation}
\begin{equation}
\mathcal{P}(D) > \mathcal{P}(A)
\end{equation}
This means there exists a fitting family of assignments to $Q$ with profit greater than that of $A$, which means $A$ is not a MPFFA for $Q$.  This is a contradiction, as it is an assumption of the lemma that $A$ is a MPFFA for $Q$.  Therefore, $A'$ must be an MPFFA for $Q'$.
\end{proof}

What this lemma means is that removing the latest jobs from the top of an MPFFA for any subproblem yields the optimal solution to another subproblem.  That means that, for some subproblem $Q$, if we have solved every subproblem $Q'$ which could potentially have a fitting family of assignments by removing the top of $Q$, and try adding every possible set of jobs with endtime equal to the height of $Q$ into a fitting family of assignments for $Q'$, then at least one of those must be a MPFFA for $Q$. We now use the lemma to prove our main result and describe a polynomial-time exact algorithm for unit-cost MDSP.\par 

\begin{theorem}
If all jobs have a cost equal to 1 and we assume that $m$ is constant, then we can solve MDSP in polynomial time.
\end{theorem}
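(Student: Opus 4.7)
The plan is to dynamic-program over the subproblems $Q = [t_1,\ldots,t_m,\mathcal{W}(S_1),\ldots,\mathcal{W}(S_m)]$ defined above, storing $\mathcal{P}(\mathrm{MPFFA}(Q))$ together with an MPFFA realizing it so that a schedule can be reconstructed. First I would bound the state space: each $t_i$ is either $0$ or the endtime $t_j^R$ of one of the $n$ jobs (at most $n+1$ values), and each $\mathcal{W}(S_i)$ is an integer in $[0,\min(n,B)]$ (at most $n+1$ values), so there are at most $(n+1)^{2m}$ subproblems, which is polynomial in $n$ since $m$ is constant. The base case is the all-zeros subproblem, whose unique fitting family is the empty assignment with profit $0$.

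Next I would set up the recurrence by reading the lemma backwards. Process subproblems in order of increasing $t_Q = \max_i t_i$; given $Q$ with top $T_Q$, the lemma says every MPFFA for $Q$ arises from an MPFFA for some earlier $Q'$ by appending, to each drone $i \in T_Q$, a distinct job $j_i$ with $t_{j_i}^R = t_Q$. Concretely, I would enumerate, for each $i \in T_Q$, a candidate job $j_i$ with $t_{j_i}^R = t_Q$ and a pre-state endtime $t_i' < t_Q$; call the choice \emph{valid} when the $j_i$'s are distinct and $t_{j_i}^L \ge t_i'$ (so $j_i$ does not conflict with anything already on drone $i$ in $Q'$), and set $Q'$ to agree with $Q$ except that $t_i$ becomes $t_i'$ and $\mathcal{W}(S_i)$ drops by $1$ for each $i \in T_Q$. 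The recurrence is then
\begin{equation*}
\mathcal{P}(\mathrm{MPFFA}(Q)) \;=\; \max_{(Q',\{j_i\})\text{ valid}}\!\left(\mathcal{P}(\mathrm{MPFFA}(Q')) + \sum_{i \in T_Q} p_{j_i}\right),
\end{equation*}
where the lemma gives the $\le$ direction and the fact that appending non-conflicting jobs to a fitting family yields a fitting family gives $\ge$. The MDSP optimum is $\max_Q \mathcal{P}(\mathrm{MPFFA}(Q))$ over all $Q$ with $\mathcal{W}(S_i) \le B$ for every $i$, since each feasible family of assignments is fitting for exactly one such $Q$.

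To finish I would bound the runtime. For each $Q$, enumerating a transition requires, for each of the $|T_Q| \le m$ drones in the top, at most $n$ candidate jobs and at most $n+1$ pre-state endtimes, giving $O(n^{2m})$ transitions per subproblem and $O(n^{4m})$ total work, which is polynomial for constant $m$. The main obstacle I anticipate is making the transition step simultaneously exhaustive and correct: I need to argue that within-drone compatibility collapses exactly to $t_{j_i}^L \ge t_i'$ (using that the intervals $I_j = (t_j^L, t_j^R)$ are open so meeting endpoints do not conflict, and that every other job on drone $i$ in $Q'$ ends by $t_i'$), to handle the corner case where $j_i$ is the only job on drone $i$ (so $t_i'=0$), and to note that no chosen $j_i$ can already appear in any MPFFA for $Q'$ because every job in $Q'$ ends strictly before $t_Q$. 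Once these bookkeeping points are verified, the lemma supplies the rest of the correctness argument.
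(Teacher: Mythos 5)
Your proposal is correct and follows essentially the same route as the paper: the same subproblem space $[t_1,\ldots,t_m,\mathcal{W}(S_1),\ldots,\mathcal{W}(S_m)]$, the same use of the lemma to justify building an MPFFA for $Q$ by extending an MPFFA for a predecessor $Q'$ with jobs ending at $t_Q$ on the drones in $T_Q$, the same rule for taking the global optimum over all subproblems with $\mathcal{W}(S_i)\le B$, and the same $O(n^{2m})\times O(n^{2m}) = O(n^{4m})$ runtime bound. Your version is if anything slightly more explicit than the paper's about the per-drone compatibility check ($t_{j_i}^L \ge t_i'$ with open intervals) and the distinctness of the appended jobs, but these are bookkeeping refinements of the same argument rather than a different approach.
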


\begin{proof}
Therefore, we can find the MPFFA for each subproblem with the following process:
\begin{itemize}
    \item Let $Q = [t_{1,Q}... t_{m,Q},\mathcal{W}(S_{1,Q})...\mathcal{W}(S_{m,Q})]$ be a subproblem with top $T_Q$.
    \item Let \textbf{Q} be the set of every subproblem $Q' = [t_{1,Q'}... t_{m,Q'},\mathcal{W}(S_{1,Q'})...\mathcal{W}(S_{m,Q'})]$ for which the following is true: for each $i\in T_Q$, we have that $t_{i,Q'} < t_{i,Q}$ and $\mathcal{W}(S_{i,Q'}) = \mathcal{W}(S_{i,Q}) - 1$ (i.e. that the cost of drone $i$ on $Q'$ is on less than the cost of drone $i$ on $Q$, which means it takes one less job), and for each $i\not\in T_Q$, we have $t_{i,Q'} = t_{i,Q}$ and $\mathcal{W}(S_{i,Q'}) = \mathcal{W}(S_{i,Q})$.  Because every $\mathcal{W}(S_{i,Q'})$ and $t_{i,Q'}$ is less than or equal to $\mathcal{W}(S_{i,Q})$ and $t_{i,Q}$ respectively, we can assume that while finding the MPFFA of $Q$, the MPFFA of every $Q'\in \textbf{Q}$ has already been found.
    \item Let \textbf{S} be the set of every (unordered) subset $S$ of the set of all jobs for which the following is true: $|S| = |T_Q|$, and for every job $j\in S$, we have that $j$ ends at time $t_Q$.
    \item For every $Q'\in\textbf{Q}$, let $A_{Q'}$ be an MPFFA for $Q'$.  For every $S\in\textbf{S}$, attempt to add to $A_{Q'}$ in the following way: for each job $j\in S$, attempt to assign $j$ to one of the drones $i\in T_Q$.  If it is not possible to add every job, move on to the next set $S$.  If every job in $S$ can be added, then because $|S| = |T_Q|$ it will result in a fitting family of assignments for $Q$.  Record that fitting family of assignments in a set \textbf{F}.
    \item By the lemma, one of the fitting family of assignments in \textbf{F} must be an MPFFA for $Q$, which means all we need to do now is find the family of assignments in \textbf{F} which yields the largest profit, and we will have an MPFFA for $Q$.
\end{itemize}
The absolutely optimal family of assignments must be a fitting family of assignments for one of the subproblems, which means that to get the maximum profit assignment of all jobs, we simply need to search every subproblem $Q = [t_1... t_m,\mathcal{W}(S_1)... \mathcal{W}(S_m)]$ for whom each $s_i \leq B$ and find the one whose MPFFA is largest.  That MPFFA will be the maximum profit assignment of all jobs.

Now, we can find the time complexity of this algorithm.  There are $O(n^{2m})$ subproblems.  Consider the case of solving a subproblem $Q = [t_{1,Q}... t_{m,Q},\mathcal{W}(S_{1,Q})...\mathcal{W}(S_{m,Q})]$ by searching through other subproblems $Q' = [t_{1,Q'}... t_{m,Q'},\mathcal{W}(S_{1,Q'})...\mathcal{W}(S_{m,Q'})]$.  Each $Q'$ has fixed values $\mathcal{W}(S_{i,Q'})$, based on $Q$ and $T_Q$, but each $t_{i,Q'}$ can take any integer value less than $t_{i,Q}$.  Therefore, each $t_{i,Q'}$ take $O(n)$ different values, so there are $O(n^m)$ different subproblems that need to be looped through.\par 
Also, for each subproblem we encounter, we need to loop through all possible subsets of size $|T_Q|$ of every job that ends at time $t_Q$.  Because $|T_Q| \leq m$, and there are $n$ total jobs, we get that there are $O(n^m)$ subsets of every job of size $|T_Q|$.  Therefore, looping through each possible subset at each valid subproblem $Q'$ takes $O(n^m)\times O(n^m) = O(n^{2m})$ time.  Because there are $O(n^{2m})$ subproblems, this means that the total amount of time this entire algorithm takes is $O(n^{4m})$.  Because $m$ is constant, that means this algorithm is polynomial in $n$.
\end{proof}

\section{Conclusion}
\label{sec:5}
In our work, we have presented an $O(n^2)$-runtime $\frac{1}{3}$-approximation algorithm for the MDSP problem, subject to the constraint that the number of conflicting jobs at any given time does not exceed the number of drones. We have also introduced the Unit-Cost MDSP problem variant and described a dynamic programming algorithm that can discover exact solutions in polynomial time, assuming the number of drones is constant.

The algorithms we described in this paper for MDSP are all bounded approximations and thus cannot be used to get arbitrarily close to the optimum. \cite{jana22} have previously demonstrated that a Fully Polynomial-Time Approximation Scheme (FPTAS) for MDSP is not possible, since an NP-complete problem called the partition problem can be reduced to the multiple knapsack problem (MKP), which can be reduced to MDSP. They showed that if an FPTAS for MKP existed, it would be able to solve the partition problem in polynomial time, which would mean that $P = NP$.  Therefore, assuming that $P\neq NP$, there cannot exist an FPTAS for MDSP, since any approximation algorithm used on MDSP can be reduced to an algorithm that can be used for MKP.

However, the lack of an FPTAS does not imply that there does not exist a polynomial time approximation scheme (PTAS) - an algorithm that generates a $1-\epsilon$ approximation of the optimum which is polynomial in the input size but not necessarily in $1/\epsilon$.  In fact, \cite{cherukuri00} devised a PTAS for MKP, which relied on guessing polynomially many solutions which can be $1-\epsilon$ times the optimum, and then using a PTAS for bin packing to find a way to pack at least $1-\epsilon$ times the optimum in $m$ bins from one of the potentially optimal solutions. This cannot, however, directly be used as a PTAS for MDSP as MDSP is a stricter problem than MKP. Investigation into a potential PTAS for MDSP, hence, is a promising area of future research.

\section*{Acknowledgements}

This project was developed as a part of the 6.5210 (Advanced Algorithms) class taught by Prof. David Karger, of the MIT Computer Science and Artificial Intelligence Laboratory (CSAIL), at MIT in the Fall 2022 semester. We would like to express our gratitude for his feedback and support throughout the class. We would also like to thank our TAs, Theia Henderson and Michael Joseph Coulombe, for their support during the class. Finally, we would also like to thank our classmates for the feedback provided on the initial versions of this project through the peer-review assignments in the class. 

\bibliography{ref}

\end{document}